\renewcommand{\i}{\mathrm{i}}
\newcommand{\e}{\mathrm{e}}
\DeclareMathAlphabet{\mathpzc}{OT1}{pzc}{m}{it}
\newcommand{\set}[1]{\mathcal{#1}}
\newcommand{\code}[1]{\mathsf{#1}}
\newcommand{\system}[1]{\mathpzc{#1}}
\newcommand{\liealg}[1]{\mathfrak{#1}}
\newtheorem{lem}{Lemma}
\newtheorem{thm}{Theorem}
\newtheorem{cor}{Corollary}
\begin{document}
\title{Restrictions on Transversal Encoded Quantum Gate Sets}

\author{Bryan Eastin}
\email{beastin@nist.gov}

\author{Emanuel Knill}

\affiliation{National Institute of Standards and Technology,
Boulder, CO 80305}

\begin{abstract}
Transversal gates play an important role in the theory of fault-tolerant quantum computation due to their simplicity and robustness to noise.  By definition, transversal operators do not couple physical subsystems within the same code block.  Consequently, such operators do not spread errors within code blocks and are, therefore, fault tolerant.  Nonetheless, other methods of ensuring fault tolerance are required, as it is invariably the case that some encoded gates cannot be implemented transversally.  This observation has led to a long-standing conjecture that transversal encoded gate sets cannot be universal.  Here we show that the ability of a quantum code to detect an arbitrary error on any single physical subsystem is incompatible with the existence of a universal, transversal encoded gate set for the code.
\end{abstract}

\maketitle

Quantum computation appears to be intrinsically more powerful than its
classical counterpart. Efficient quantum algorithms have been found
for certain problems that, using the best known classical algorithms,
require resources that scale as a super-polynomial function of the problem
size~\cite{Shor94,Aharonov06,Mosca08}.
However, implementing a computation large enough to take advantage of such scaling properties is a daunting challenge.  Given the difficulty of constructing quantum hardware, it seems likely that the software for the first quantum computers will need to incorporate significant amounts of error checking.

As in the classical case, quantum errors are rendered detectable by
encoding the system of interest into a subspace of a larger, typically
composite, system. A quantum code simply specifies which states of a
quantum system correspond to which logical (encoded) information
states.
Errors that move states outside of the logical subspace can be
detected by measuring the projector $P$ onto this subspace.  Thus, an error $E$ is detectable, in the sense that it can be discovered or eliminated, if and only if
\begin{align*}
  P E P \propto P\;.
\end{align*}
Of course, not all errors can be detected; for any nontrivial code there are operators that act in a nontrivial way within the logical subspace.  Most commonly, quantum codes are designed to permit the detection of independent, local errors and, as a consequence, are incapable of detecting some errors that affect many subsystems.

For quantum computation, it is necessary not only to detect errors but also to apply operators (gates) that transform the logical state of the code.  Even when error processes are local and independent, however, the operations entailed in computing can generate correlated errors from uncorrelated ones.  Thus, for error detection to be
effective, it is important that the logical operators employed during a
quantum computation be designed to limit the spread of errors.
It is particularly important that operators do not spread errors within code blocks, where a block of a quantum code is defined as a collection of subsystems for which errors on subsystems in the collection are detected independently of those on subsystems outside of it.
Managing the spread of errors is the subject of the theory of fault-tolerant quantum computing~\cite{Shor97,Preskill98}.
One of the primary techniques of this theory is the use of transversal encoded gates.

We label as ``transversal'' any partition of the physical subsystems of a code such that each part contains one subsystem from each code block.
Given a transversal partition of a code, an operator is called transversal if it exclusively couples subsystems within the same part.
Put another way, an operator is transversal if it couples no subsystem of a
code block to any but the corresponding subsystem in another code block.
Transversal operators are inherently fault tolerant. They can spread
errors between code blocks, thereby increasing the number of locations
at which a code block's error might have originated, but, since
errors on different code blocks are treated independently, the total number of errors
necessary to cause a failure is unchanged. This is in contrast to
non-transversal operators, where, for example, an encoded gate
coupling every subsystem in a code block might convert an error on a
single subsystem into an error on every subsystem of the code block.

In view of the above, it would be highly desirable to carry out
quantum computations exclusively using transversal encoded gates. To allow for
arbitrary computation, it is necessary that the set of gates employed be
universal, that is, that it be capable of implementing any encoded
operator on the logical state space to arbitrarily high accuracy. However, in
spite of substantial effort, no gate set for a nontrivial
quantum code has yet been found that is both universal and
transversal.  Consequently, a long-standing question in quantum information theory
is whether there exist nontrivial quantum codes for which all logical
gates can be implemented transversally.  For stabilizer codes, this question has recently
been answered in the negative.  Zeng et al.~\cite{Zeng07} showed that transversal unitary operators are not universal for stabilizer codes on two-level subsystems (qubits); the companion result for the
case of $d$-level subsystems (qudits) was proven by Chen et al.~\cite{Chen08}.
In this paper we present a more general proof based on the structure of the Lie
group of transversal unitary operators.
Our result applies to all
local-error-detecting quantum codes, that is, all quantum codes capable of detecting an arbitrary error on any single subsystem.

An outline of the argument is as follows: The set of logical unitary product operators, $\set{G}$, is a Lie subgroup of the Lie group of unitary product operators, $\set{T}$.  As a Lie group, $\set{G}$ can be
partitioned into cosets of the connected component
of the identity, $\set{C}$; these cosets form a discrete set, $\set{Q}$.  Using the fact that the Lie
algebra of $\set{C}$ is a subalgebra of $\set{T}$, it can be shown that the connected component of
the identity acts trivially for any local-error-detecting code. This implies that the number of logically distinct operators implemented by elements of $\set{G}$
is limited to the cardinality of $\set{Q}$.
Due to to the compactness of $\set{T}$, this
number must be finite. A finite number of operators can approximate infinitely many
only up to some fixed
accuracy; thus, $\set{G}$, the set of logical unitary product operators, cannot be
universal.
Transversal operators may be viewed as product operators with respect to a transversal partitioning of the code, so the ability to detect an arbitrary error on a transversal part implies the non-existence of a universal, transversal encoded gate set.

We begin by exploring the structure of various sets of unitary operators and
subsequently move to our central theorem.  The following material relies heavily on results from topology and the theory of Lie groups.
An accessible introduction to these topics can be found, for example, in Refs.~\cite{Munkres,Hall} and on Wikipedia~\cite{Wikipedia}.

Consider a quantum system of finite dimension $d$. The set
$\set{U}(d)$ of unitary operators on a $d$-dimensional quantum
system forms a compact, connected Lie group with a Lie algebra
consisting of the Hermitian operators.\footnote{Following the convention in physics, we include a factor of
$\i$ in the mapping between elements of the Lie algebra and Lie
group.}
% See Hall pg. 11 & 15 for compactness and connectedness of the group of unitaries.
Thus, any unitary operator $U\in\set{U}(d)$ satisfies
\begin{align*}
  U = \e^{\i H}
\end{align*}
for some Hermitian operator $H$.
% Zelobenko pg. 291 states that any element of a compact, connected Lie group can be written as a single exponential (rather than as a product of exponentials) of an element of the Lie algebra.

Now consider a composite quantum system $\system{Q}$ composed of $n$
physical subsystems, where the dimension of the $j$th subsystem is
$d_j$. Let $\set{T}$ denote the set of all unitary product operators, that is, all operators
of the form
\begin{align*}
  \bigotimes_{j=1}^n U_j\;,
\end{align*}
where $U_j\in \set{U}(d_j)$.  Being a
direct product of a finite number of compact Lie groups, $\set{T}$ is
also a compact Lie group. For the same reason, $\set{T}$ has a Lie
algebra $\liealg{t}$ given by the direct sum of the Lie algebras of the
component groups.

Given a quantum code $\code{C}$ on the system $\system{Q}$, the set of logical unitary operators on $\system{Q}$ is defined as the subset of unitary operators that preserve the code space.  In terms of a projector $P$ onto the code states of
$\code{C}$, this is the statement that a unitary operator $U$ is a logical operator if and only if
\begin{align}\label{eq:defLogicalOperator}
  (I-P) U P = 0 \;.
\end{align}
Note that $(I-P) U P$ is a
continuous function of $U$.

\begin{lem}\label{lem:logicalGroup}
The set of logical unitary operators forms a group.
\end{lem}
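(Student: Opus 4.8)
The plan is to verify the group axioms for the set of logical unitary operators under operator multiplication, working throughout from the defining relation~\eqref{eq:defLogicalOperator}. The first move is to recast that relation in geometric form: $(I-P)UP=0$ is equivalent to $UP=PUP$, which says exactly that $U$ carries the code space (the range of $P$) into itself. Associativity is inherited from multiplication on $\set{U}(d)$, so the real content lies in closure, the identity, and inverses; of these, the inverse is where I expect the only genuine difficulty.

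The identity is immediate, since $(I-P)IP = P-P^2 = 0$. For closure, I would suppose $U$ and $V$ are both logical, so that $UP=PUP$ and $VP=PVP$, and then compute
\begin{align*}
  (I-P)\,UVP = (I-P)\,U(PVP) = (I-P)(PUP)\,VP = 0,
\end{align*}
where the first equality uses $VP=PVP$, the second uses $UP=PUP$, and the last uses $(I-P)P=0$. Geometrically this is transparent: $V$ sends code states into the code space, and $U$ keeps them there, so $UV$ is again logical.

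The subtle axiom is the existence of inverses, and the reason is instructive. Taking the adjoint of $UP=PUP$ gives $PU^\dagger=PU^\dagger P$, which asserts only that $U^\dagger=U^{-1}$ preserves the \emph{orthogonal complement} of the code space, not the code space itself; so the defining relation is not manifestly symmetric under inversion. To close this gap I would invoke unitarity together with the finiteness of the total dimension $d$. Since $U$ is an isometry mapping the code space into itself and that space is finite dimensional, the image of the code space under $U$ has the same dimension as the code space while being contained in it, which forces $U$ to map the code space \emph{onto} itself. Applying $U^{-1}$ then returns the code space to itself, i.e.\ $(I-P)U^{-1}P=0$, so $U^{-1}$ is logical. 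With closure, the identity, inverses, and associativity all in hand, the set of logical unitary operators is a group. The main obstacle, as noted, is precisely this inverse step: one must resist the (false) expectation that the adjoint of the defining condition reproduces the condition itself, and instead supply the finite-dimensional isometry argument that upgrades ``into'' to ``onto.''
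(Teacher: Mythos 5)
Your proof is correct and follows essentially the same route as the paper's: identity and closure are verified by the same one-line computations, and the inverse axiom is recognized as the only nontrivial step, resolved by exploiting unitarity and finite dimensionality. The paper packages that last step algebraically (showing $PU^\dagger P$ inverts $PUP$ on the code space) rather than via your dimension-counting ``injective isometry into a finite-dimensional subspace is onto'' argument, but both rest on the same underlying fact, and your explicit warning that the adjoint of the defining condition does not directly yield the condition for $U^\dagger$ is a point the paper leaves implicit.
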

\begin{proof}
Let $P$ be the projector onto the logical subspace of a quantum code.  The set of logical unitary operators, $\set{L}$, consists of all unitary operators $U$ satisfying
\begin{align*}
  P U P = U P\;.
\end{align*}
The set $\set{L}$ fulfills the four requirements of a group:
The multiplication of unitary operators is associative.  The identity, $I$, is contained in $\set{L}$ as
\begin{align*}
  P I P = P^2 = P = I P\;.
\end{align*}
The group property of closure is satisfied since
\begin{align*}
  PUVP = PUPVP = UPVP = UVP
\end{align*}
for any $U,V\in\set{L}$. The inverse $U^\dagger$ of any $U\in
\set{L}$ is contained in $\set{L}$ since
\begin{align*}
  \left(P U^\dagger P\right) \left(P U P\right) = \left(P U^\dagger\right) \left(U P\right) = P\;,
\end{align*}
which implies that $P U^\dagger P$ is the inverse of $P U P$ on the
subspace $P$ and therefore that
\begin{align*}
U^\dagger (P) = U^\dagger (P U P P U^\dagger P) = U^\dagger U P P U^\dagger P = P U^\dagger P\; .
\end{align*}
\end{proof}

\begin{lem}\label{lem:logicalLieGroup}
The logical operators contained in a Lie group of unitary operators form a Lie subgroup.
\end{lem}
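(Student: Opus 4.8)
The plan is to invoke Cartan's closed subgroup theorem, which asserts that any topologically closed subgroup of a Lie group is itself an embedded Lie subgroup. It therefore suffices to show that the logical operators contained in a given Lie group of unitary operators form a subgroup and that this subgroup is closed.

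First I would establish the subgroup property. Let $\set{G}$ be a Lie group of unitary operators, and let $\set{L}$ denote the set of all logical unitary operators, which is a group by Lemma~\ref{lem:logicalGroup}. The logical operators contained in $\set{G}$ are precisely the elements of the intersection $\set{G}\cap\set{L}$, and since the intersection of two subgroups of the unitary group is again a subgroup, $\set{G}\cap\set{L}$ is a subgroup of $\set{G}$.

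Next I would show that this intersection is closed in $\set{G}$. By Eq.~\eqref{eq:defLogicalOperator}, a unitary operator $U$ belongs to $\set{L}$ if and only if $(I-P)UP=0$. As already noted, the map $U\mapsto (I-P)UP$ is continuous, so $\set{L}$ is the preimage of the closed singleton $\{0\}$ and is therefore closed; consequently $\set{G}\cap\set{L}$ is closed in $\set{G}$. Applying Cartan's closed subgroup theorem to the closed subgroup $\set{G}\cap\set{L}$ of the Lie group $\set{G}$ then gives the desired conclusion that the logical operators in $\set{G}$ form a Lie subgroup.

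The only substantive ingredient is the closed subgroup theorem itself, which supplies the smooth manifold structure; the rest is a direct verification of its two hypotheses. I expect no real obstacle here, with the mild technical point being the identification of closedness through the continuity of $U\mapsto(I-P)UP$, which is immediate from Eq.~\eqref{eq:defLogicalOperator}.
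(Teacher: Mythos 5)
Your proposal is correct and follows essentially the same route as the paper: establish the subgroup property via Lemma~\ref{lem:logicalGroup} and closure under intersection, show closedness from the continuity of $U\mapsto(I-P)UP$ in Eq.~\eqref{eq:defLogicalOperator}, and invoke Cartan's closed subgroup theorem. The only cosmetic difference is that you argue closedness relative to the ambient Lie group directly, whereas the paper notes that the Lie group itself is closed so the intersection is closed in the full unitary group; both satisfy the hypothesis of Cartan's theorem.
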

\begin{proof}
Let $\set{L}$ be the set of logical unitary operators for a given code, let $\set{A}$ be a Lie group of unitary operators, and let $\set{B}=\set{A}\cap \set{L}$.  Lemma~\ref{lem:logicalGroup} shows that $\set{L}$ is a group.  Because the intersection of two groups is a group, $\set{B}$ is a subgroup of $\set{A}$.  Topologically speaking, $\set{L}$ is a closed set since, as seen from Eq.~\ref{eq:defLogicalOperator}, it is a preimage of a closed set under a continuous function.  Being a Lie group, $\set{A}$ is also a topologically closed set, and therefore $B$ is as well.  That $\set{B}$ is a Lie subgroup of $\set{A}$ follows from a theorem by Cartan (See page 3 of Ref.~\cite{Sepanski}.), which states that a topologically closed subgroup of a Lie group is a Lie subgroup.
\end{proof}

\begin{thm}\label{thm:productGatesAndErrorDetection}
For any nontrivial local-error-detecting quantum code, the set of logical unitary product operators is not universal.
\end{thm}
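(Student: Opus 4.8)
The plan is to exploit the Lie-group structure of $\set{G}$, the set of logical unitary product operators, to show that it implements only finitely many distinct logical operations. First I would invoke Lemma~\ref{lem:logicalLieGroup} with $\set{A}=\set{T}$: since $\set{G}=\set{T}\cap\set{L}$, this makes $\set{G}$ a Lie subgroup of $\set{T}$. Because $\set{T}$ is compact and $\set{G}$ is topologically closed in it, $\set{G}$ is itself a compact Lie group and therefore has only finitely many connected components. Writing $\set{C}$ for the connected component of the identity, the quotient $\set{Q}=\set{G}/\set{C}$ is thus a finite set of cosets.

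The heart of the argument is to show that $\set{C}$ acts trivially on the code space, so that the logical operation implemented by an element of $\set{G}$ depends only on its coset in $\set{Q}$. Since $\set{C}$ is a compact connected Lie group, the exponential map is surjective, so every $U\in\set{C}$ can be written as $U=\e^{\i H}$ with $H$ in the Lie algebra $\liealg{c}$ of $\set{C}$, and the entire one-parameter subgroup $\e^{\i t H}$ lies in $\set{C}\subseteq\set{L}$. Differentiating the logical condition $(I-P)\e^{\i t H}P=0$ at $t=0$ yields $(I-P)HP=0$, that is, $HP=PHP$. Now I would use the two facts that make the code structure bite: $\liealg{c}$ is a subalgebra of $\liealg{t}$, whose elements are sums $H=\sum_j H_j$ of single-subsystem Hermitian operators; and the local-error-detection hypothesis guarantees $PH_jP\propto P$ for each $j$. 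Summing over $j$ gives $PHP=cP$ with $c$ real, and combined with $HP=PHP$ this yields $HP=cP$. Exponentiating, $UP=\e^{\i c}P$, so $U$ acts on the code space as a global phase.

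With this established, the remaining steps are short. If $U$ and $V$ lie in the same coset, then $V=UW$ for some $W\in\set{C}$, so $VP=\e^{\i c}UP$ and the two operators implement the same logical operation up to an (irrelevant) global phase. Hence the number of logically distinct operations realized by $\set{G}$ is at most $|\set{Q}|$, which is finite. For a nontrivial code the logical state space has dimension at least two, so the group of logical unitaries is a positive-dimensional continuum that no finite set can approximate to arbitrary accuracy; therefore $\set{G}$ is not universal. Since transversal operators are precisely the product operators relative to a transversal partition, and a local-error-detecting code detects arbitrary errors on any transversal part, the same conclusion rules out a universal transversal gate set.

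I expect the main obstacle to be the middle step: passing from the Lie-algebra description of $\set{C}$ to the operator identity $HP=cP$. One must justify that a generic element of $\set{C}$ really is $\e^{\i H}$ for $H\in\liealg{c}$ (compactness and connectedness of $\set{C}$), that differentiating the logical condition at the identity is legitimate, and that the single-subsystem structure of $\liealg{t}$ is exactly what lets the error-detection condition collapse $PHP$ to a scalar multiple of $P$. The rest is topological bookkeeping.
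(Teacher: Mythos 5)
Your proposal is correct and follows essentially the same route as the paper: realize $\set{G}$ as a Lie subgroup of $\set{T}$ via Lemma~\ref{lem:logicalLieGroup}, show that the Lie algebra of its identity component $\set{C}$ consists of sums of single-subsystem Hermitian operators which local error detection forces to act as scalars on the code space, conclude that $\set{C}$ acts as a phase, and hence that only finitely many logical operations are realized. The only (harmless) deviations are that you invoke surjectivity of the exponential map for the compact connected group $\set{C}$ where the paper writes its elements as products of exponentials, and you obtain finiteness of $\set{Q}$ from compactness of $\set{G}$ rather than from discreteness of the coset representatives inside the compact group $\set{T}$.
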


\begin{proof}
\ \par

Let $\system{Q}$, as defined earlier, be a composite quantum system
supporting a local-error-detecting code $\code{C}$.  The set of
unitary product operators on $\system{Q}$ is the compact Lie group that
was earlier denoted by $\set{T}$.

Lemma~\ref{lem:logicalLieGroup} shows that $\set{G}$, the subset of unitary product operators that are also logical
operators, forms a Lie subgroup of $\set{T}$.

As a Lie group, $\set{G}$ can be partitioned into cosets of the connected component of the identity, $\set{C}$, where $\set{C}$ is a Lie subgroup of $\set{G}$.  This set of cosets is the quotient group $\set{Q} = \set{G}/\set{C}$ and constitutes a topologically discrete group.

Because $\set{C}$ is a connected Lie group, any element $C\in\set{C}$ can be written as
\begin{align*}
  C = \prod_k \e^{\i D_k}\;,
\end{align*}
where $D_k$ is in $\liealg{c}$, the Lie algebra of $\set{C}$.
For any $D\in\liealg{c}$ and $\epsilon\in \Re$, the operator $\e^{\i\epsilon
D}$ is also in $\set{C}$ and is, consequently, a logical gate satisfying
\begin{align*}
  0 = (I-P) \e^{\i\epsilon D} P\;.
\end{align*}
Since $(I-P)IP = 0$, we also have
\begin{align*}
  0 = \lim_{\epsilon\rightarrow 0} (I-P) \left[\frac{\e^{\i\epsilon D}-I}{\i\epsilon}\right] P = (I-P)D P
\end{align*}
for all $D\in\liealg{c}$.

As $\set{C}$ is a Lie subgroup of the Lie group $\set{T}$, its Lie
algebra $\liealg{c}$ must be a subalgebra of $\liealg{t}$, the Lie algebra
of $\set{T}$. Consequently, every element $D\in\liealg{c}$ can
be written in the form
\begin{align*}
  D = \sum_{j=1}^n \alpha_{j}H_j\;,
\end{align*}
where $\alpha_{j}\in\Re$ and $H_j$ is a Hermitian operator applied to the $j$th subsystem.
Any local Hermitian operator can be written as a sum over local error operators, so
\begin{align*}
  P H_j P \propto P\;,
\end{align*}
where $P$ is the projector onto the code space of $\code{C}$.

Combining the preceding three equations yields
\begin{align*}
  \begin{split}
    D P = P D P = P \sum_{j=1}^n \alpha_{j}H_j P
    = \sum_{j=1}^n \alpha_{j} P H_j P \propto P
  \end{split}
\end{align*}
for all $D\in\liealg{c}$, which shows that
\begin{align*}
  C P = \prod_k \e^{\i D_k} P \propto P \;.
\end{align*}
Since $C$ is a unitary operator, the constant of proportionality must be one.
Thus, whether it is trivial or not, all operators contained in
$\set{C}$ act as the identity on the code space.

Let $\set{F}$ be a set consisting of one representative from each coset of $\set{C}$ in $\set{G}$.
The preceding paragraph shows
that every operator in the group $\set{G}$ acts on the code space as
an operator from $\set{F}$.  In other words, for every
$G\in\set{G}$,
\begin{align*}
  G P = F C P = F P
\end{align*}
for some $F\in\set{F}$ and $C\in\set{C}$.

The operators induced by $\set{G}$ on the logical quantum system are closed under composition and limited in number to the cardinality of $\set{F}$.
The set $\set{F}$ is discrete since its elements are representatives taken from each of the cosets comprising the discrete group $\set{Q}=\set{G}/\set{C}$.
It follows that $\set{F}$ is also finite, being a discrete subset of a compact
group, namely $\set{T}$.  However, for a nontrivial encoded quantum
system, the number of logically distinct operators is uncountably infinite.
As the set of all unitary operators is a metric space, a finite number of unitary operators cannot approximate infinitely many
to arbitrary precision.\footnote{By contrast, the Solovay-Kitaev theorem~\cite{Kitaev97b,Dawson06} states that a universal, and infinite, set of operators can be generated by composition from certain finite sets of operators.  In our case, composition yields nothing new.}
Thus, $\set{G}$, the set of logical product operators, is not universal.
\end{proof}

Theorem~\ref{thm:productGatesAndErrorDetection} considers only
product gates, but the same basic approach
can be applied to the case of transversal gates.

\begin{cor}\label{cor:transversalGatesAndErrorDetection}
For any nontrivial local-error-detecting quantum code, the set of transversal, logical unitary operators is not universal.
\end{cor}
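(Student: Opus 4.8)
The plan is to view a transversal operator as nothing but a product operator with respect to a coarser tensor decomposition of $\system{Q}$, and then to reduce the corollary to Theorem~\ref{thm:productGatesAndErrorDetection} applied to that decomposition. First I would fix a transversal partition of the physical subsystems into parts $\system{R}_1,\dots,\system{R}_r$, each part containing exactly one subsystem from every code block, and regard each part as a single composite subsystem. With respect to this regrouping, the set of transversal unitary operators is precisely the set of unitary product operators, call it $\set{T}'$, and the transversal logical operators are precisely the logical product operators $\set{G}'=\set{T}'\cap\set{L}$. The statement to be proved thereby becomes a direct instance of Theorem~\ref{thm:productGatesAndErrorDetection}, provided its hypothesis can be established for the new decomposition.

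The remaining work is therefore to verify that hypothesis: that the code detects an arbitrary error on any single part $\system{R}_k$. I would argue this in two stages. Observe that the set of detectable errors $\{E : P E P \propto P\}$ is a linear space, since $PEP\propto P$ is linear in $E$; consequently it suffices to exhibit detectability for a spanning set of operators on $\system{R}_k$. Because $\system{R}_k$ consists of one subsystem drawn from each block, every operator on $\system{R}_k$ is a linear combination of product operators $\bigotimes_b E_b$, where $E_b$ acts on the lone subsystem of $\system{R}_k$ lying in block $b$. Each factor $E_b$ is an arbitrary operator on a single physical subsystem and is therefore detectable by the local-error-detecting hypothesis.

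The crux is to lift detectability of the individual factors to detectability of the product $\bigotimes_b E_b$, and this is exactly where the block structure enters. By definition a block is a collection of subsystems whose errors are detected independently of those outside it, which I would formalize as the statement that a tensor product of errors taken from distinct blocks is detectable whenever each factor is detectable on its own block. Granting this, every product $\bigotimes_b E_b$ is detectable, and by linearity so is every operator supported on $\system{R}_k$; in particular $P H P \propto P$ for every Hermitian $H$ supported on a single part, which is the precise input that the Lie-algebra argument of Theorem~\ref{thm:productGatesAndErrorDetection} consumes. With the hypothesis verified for the part decomposition, the conclusion that $\set{G}'$ is not universal follows at once. I expect the main obstacle to be this lifting step: making the informal notion of independent detection across blocks precise enough to guarantee the product rule for detectable errors, while keeping it weak enough to remain valid for the codes of interest.
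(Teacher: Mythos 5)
Your proposal is correct and follows essentially the same route as the paper: regard each transversal part as a single composite subsystem, apply Theorem~\ref{thm:productGatesAndErrorDetection} to that coarser product structure, and verify its hypothesis by decomposing an arbitrary error on a part into single-subsystem errors distributed over distinct code blocks. The only difference is that you spell out the linearity and cross-block product-detectability step that the paper compresses into one sentence, which is a fair elaboration rather than a new argument.
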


\begin{proof}
This result follows directly from an application of
Theorem~\ref{thm:productGatesAndErrorDetection} in which the physical subsystems are replaced by transversal parts.  Each part contains a set of physical subsystems that can be coupled by transversal operators.  Transversal operators may therefore be regarded as product operators on the transversal parts.  Theorem~\ref{thm:productGatesAndErrorDetection} thus proves that the set of transversal, logical unitary operators is not universal for any nontrivial quantum code capable of detecting an arbitrary error on a single transversal part.  For a local-error-detecting code,
the condition that any error on
a single transversal part be detectable is satisfied since this corresponds to a
single-subsystem error on each of the code blocks.
\end{proof}

As with any impossibility proof, perhaps the most interesting aspect
of Corollary~\ref{cor:transversalGatesAndErrorDetection} is how it can be circumvented. The most obvious
circumvention, and an avenue that has been thoroughly explored, is to
employ non-unitary operators~\cite{Knill04,Bravyi05,Zhou00}.  The standard method of achieving universal fault-tolerant quantum computation
takes this approach, making extensive use of measurements and classical feed-forward during the
preparation, testing, and coupling of ancillary states.  Alternatively, one might retain unitarity and instead loosen the requirements
of transversality or universality or even error detection,
options that we discuss in turn.

Among the alternatives listed, non-transversal operators provide the most promising approach to circumventing Theorem~\ref{thm:productGatesAndErrorDetection}.
References~\cite{Zeng07,Chen08} discuss the possibility of achieving
universality through the addition of coordinate permutations, which,
taken in isolation, are fault tolerant.  Zeng et al.\ note that the
encoded Hadamard gate for the Bacon-Shor codes~\cite{Bacon06} involves a coordinate
permutation and therefore is not transversal.  In fact, for these codes, some sequences of
encoded Hadamard and controlled-NOT gates are not fault tolerant; a single physical gate failure is capable of producing
two errors on a single code block.
Strict fault tolerance is achieved by checking for errors prior to coupling code blocks using a new transversal partition.  Such codes demonstrate that it
is sufficient for individual logical gates to avoid directly
coupling subsystems of a code block.  A quantum code for which there existed a universal set of encoded gates each transversal in isolation would be extremely useful.

Along a different line, we might imagine demanding less than full universality.  Finite groups of operators are already an important component of schemes for fault-tolerant quantum computing.  These schemes typically take advantage of the existence of codes for which the Clifford gates, a finite subgroup of all gates, are both sufficient for error detection and transversally implementable.  The Clifford gates are not the only set that can be implemented transversally, however.  It would be interesting to quantify the maximum size of finite group
that is achievable transversally and to investigate the computational power of the non-Clifford finite gate groups.

Given a local error model, it seems unprofitable to abandon
local error detection entirely.  In order to violate the assumptions of our proof, however, it is sufficient that detection not be deterministic.  It might be possible to find a family of codes satisfying both the universality and transversality conditions for which the probability of failing to detect an error on a single subsystem can be made arbitrarily small.  The usefulness of such a family of codes would depend on the scaling of the failure probability with the size of the code.

In conclusion, we have presented a proof that the ability of a quantum
code to detect arbitrary errors
on component subsystems is incompatible with the existence of a
universal, transversal, and unitary encoded gate set. Our proof makes
no assumptions about the dimensions of the quantum subsystems beyond
requiring that they be finite.  The quantum system encoded is assumed to
be nontrivial, that is, to have dimension greater than one.  The precise structure of the quantum code and its initialization state are unspecified.
Our result rules out the use of transversal unitary operators with local error detection as an exclusive means to
obtain universality, but it also suggests some interesting new avenues of
investigation.\\

\acknowledgments

We thank Adam Meier, Scott Glancy, and Yanbao Zhang for their questions and
comments during the development of this proof. Special thanks go to
Sergio Boixo for the discussion that spawned the idea that local error
detection and infinitesimal transversal gates were incompatible.  Preliminary investigations on this
topic were funded by National Science Foundation Grant
No.~PHY-0653596. This paper is a contribution by the National
Institute of Standards and Technology and, as such, is not subject to
US copyright.

\bibliography{../../citations}

\begin{thebibliography}{17}
\expandafter\ifx\csname natexlab\endcsname\relax\def\natexlab#1{#1}\fi
\expandafter\ifx\csname bibnamefont\endcsname\relax
  \def\bibnamefont#1{#1}\fi
\expandafter\ifx\csname bibfnamefont\endcsname\relax
  \def\bibfnamefont#1{#1}\fi
\expandafter\ifx\csname citenamefont\endcsname\relax
  \def\citenamefont#1{#1}\fi
\expandafter\ifx\csname url\endcsname\relax
  \def\url#1{\texttt{#1}}\fi
\expandafter\ifx\csname urlprefix\endcsname\relax\def\urlprefix{URL }\fi
\providecommand{\bibinfo}[2]{#2}
\providecommand{\eprint}[2][]{\url{#2}}

\bibitem[{\citenamefont{Shor}(1994)}]{Shor94}
\bibinfo{author}{\bibfnamefont{P.~W.} \bibnamefont{Shor}}, in
  \emph{\bibinfo{booktitle}{Proc. 35th Symp. Found. Comp. Sci.}}
  (\bibinfo{year}{1994}), p. \bibinfo{pages}{124},
  \eprint{arXiv:quant-ph/9508027}.

\bibitem[{\citenamefont{Aharonov et~al.}(2006)\citenamefont{Aharonov, Jones,
  and Landau}}]{Aharonov06}
\bibinfo{author}{\bibfnamefont{D.}~\bibnamefont{Aharonov}},
  \bibinfo{author}{\bibfnamefont{V.}~\bibnamefont{Jones}}, \bibnamefont{and}
  \bibinfo{author}{\bibfnamefont{Z.}~\bibnamefont{Landau}}, in
  \emph{\bibinfo{booktitle}{Proc. 38th Symp. Theory Comp.}}
  (\bibinfo{year}{2006}), p. \bibinfo{pages}{427},
  \eprint{arXiv:quant-ph/0511096}.

\bibitem[{\citenamefont{Mosca}(2008)}]{Mosca08}
\bibinfo{author}{\bibfnamefont{M.}~\bibnamefont{Mosca}},
  \emph{\bibinfo{title}{Quantum algorithms}}, Springer Encyclopedia of Complexity and Systems Science (\bibinfo{publisher}{Springer}, \bibinfo{year}{2008}),
  \eprint{arXiv:0808.0369}.

\bibitem[{\citenamefont{Shor}(1996)}]{Shor97}
\bibinfo{author}{\bibfnamefont{P.}~\bibnamefont{Shor}}, in
  \emph{\bibinfo{booktitle}{Proc. 37th Symp. Found. Comp. Sci.}}
  (\bibinfo{year}{1996}), p.~\bibinfo{pages}{56},
  \eprint{arXiv:quant-ph/9605011}.

\bibitem[{\citenamefont{Preskill}(1998)}]{Preskill98}
\bibinfo{author}{\bibfnamefont{J.}~\bibnamefont{Preskill}}, in
  \emph{\bibinfo{booktitle}{Introduction to Quantum Computation}}
  (\bibinfo{publisher}{World Scientific}, \bibinfo{year}{1998}),
  \eprint{arXiv:quant-ph/9712048}.

\bibitem[{\citenamefont{Zeng et~al.}(2007)\citenamefont{Zeng, Cross, and
  Chuang}}]{Zeng07}
\bibinfo{author}{\bibfnamefont{B.}~\bibnamefont{Zeng}},
  \bibinfo{author}{\bibfnamefont{A.}~\bibnamefont{Cross}}, \bibnamefont{and}
  \bibinfo{author}{\bibfnamefont{I.~L.} \bibnamefont{Chuang}},
  \emph{\bibinfo{title}{Transversality versus universality for additive quantum
  codes}} (\bibinfo{year}{2007}), \eprint{arXiv:0706.1382}.

\bibitem[{\citenamefont{Chen et~al.}(2008)\citenamefont{Chen, Chung, Cross,
  Zeng, and Chuang}}]{Chen08}
\bibinfo{author}{\bibfnamefont{X.}~\bibnamefont{Chen}},
  \bibinfo{author}{\bibfnamefont{H.}~\bibnamefont{Chung}},
  \bibinfo{author}{\bibfnamefont{A.~W.} \bibnamefont{Cross}},
  \bibinfo{author}{\bibfnamefont{B.}~\bibnamefont{Zeng}}, \bibnamefont{and}
  \bibinfo{author}{\bibfnamefont{I.~L.} \bibnamefont{Chuang}},
  \bibinfo{journal}{Phys. Rev. A} \textbf{\bibinfo{volume}{78}},
  \bibinfo{eid}{012353} (\bibinfo{year}{2008}), \eprint{arXiv:0801.2360}.

\bibitem[{\citenamefont{Munkres}(2000)}]{Munkres}
\bibinfo{author}{\bibfnamefont{J.~R.} \bibnamefont{Munkres}},
  \emph{\bibinfo{title}{Topology}} (\bibinfo{publisher}{Prentice Hall},
  \bibinfo{address}{New Jersey, USA}, \bibinfo{year}{2000}),
  \bibinfo{edition}{2nd} ed.

\bibitem[{\citenamefont{Hall}(2004)}]{Hall}
\bibinfo{author}{\bibfnamefont{B.~C.} \bibnamefont{Hall}},
  \emph{\bibinfo{title}{Lie Groups, Lie Algebras, and Representations: An
  Elementary Introduction}} (\bibinfo{publisher}{Springer},
  \bibinfo{address}{New York, USA}, \bibinfo{year}{2004}).

\bibitem[{Wik()}]{Wikipedia}
\urlprefix\url{http://www.wikipedia.org/}.

\bibitem[{\citenamefont{Sepanski}(2007)}]{Sepanski}
\bibinfo{author}{\bibfnamefont{M.~R.} \bibnamefont{Sepanski}},
  \emph{\bibinfo{title}{Compact Lie Groups}} (\bibinfo{publisher}{Springer},
  \bibinfo{address}{New York, USA}, \bibinfo{year}{2007}).

\bibitem[{\citenamefont{Kitaev}(1997)}]{Kitaev97b}
\bibinfo{author}{\bibfnamefont{A.~Y.} \bibnamefont{Kitaev}},
  \bibinfo{journal}{Russ. Math. Surv.} \textbf{\bibinfo{volume}{52}},
  \bibinfo{pages}{1191} (\bibinfo{year}{1997}).

\bibitem[{\citenamefont{Dawson and Nielsen}(2006)}]{Dawson06}
\bibinfo{author}{\bibfnamefont{C.~M.} \bibnamefont{Dawson}} \bibnamefont{and}
  \bibinfo{author}{\bibfnamefont{M.~A.} \bibnamefont{Nielsen}},
  \bibinfo{journal}{Quant. Inf. Comp.} \textbf{\bibinfo{volume}{6}},
  \bibinfo{pages}{81} (\bibinfo{year}{2006}), \eprint{arXiv:quant-ph/0505030}.

\bibitem[{\citenamefont{Knill}(2004)}]{Knill04}
\bibinfo{author}{\bibfnamefont{E.}~\bibnamefont{Knill}},
  \emph{\bibinfo{title}{Fault-tolerant postselected quantum computation:
  Schemes}} (\bibinfo{year}{2004}), \eprint{arXiv:quant-ph/0402171}.

\bibitem[{\citenamefont{Bravyi and Kitaev}(2005)}]{Bravyi05}
\bibinfo{author}{\bibfnamefont{S.}~\bibnamefont{Bravyi}} \bibnamefont{and}
  \bibinfo{author}{\bibfnamefont{A.}~\bibnamefont{Kitaev}},
  \bibinfo{journal}{Phys. Rev. A} \textbf{\bibinfo{volume}{71}},
  \bibinfo{eid}{022316} (\bibinfo{year}{2005}),
  \eprint{arXiv:quant-ph/0403025}.

\bibitem[{\citenamefont{Zhou et~al.}(2000)\citenamefont{Zhou, Leung, and
  Chuang}}]{Zhou00}
\bibinfo{author}{\bibfnamefont{X.}~\bibnamefont{Zhou}},
  \bibinfo{author}{\bibfnamefont{D.~W.} \bibnamefont{Leung}}, \bibnamefont{and}
  \bibinfo{author}{\bibfnamefont{I.~L.} \bibnamefont{Chuang}},
  \bibinfo{journal}{Phys. Rev. A} \textbf{\bibinfo{volume}{62}},
  \bibinfo{pages}{052316} (\bibinfo{year}{2000}),
  \eprint{arXiv:quant-ph/0002039}.

\bibitem[{\citenamefont{Bacon}(2006)}]{Bacon06}
\bibinfo{author}{\bibfnamefont{D.}~\bibnamefont{Bacon}},
  \bibinfo{journal}{Phys. Rev. A} \textbf{\bibinfo{volume}{73}},
  \bibinfo{eid}{012340} (\bibinfo{year}{2006}),
  \eprint{arXiv:quant-ph/0506023}.

\end{thebibliography}

\end{document}